\newtheorem{theorem}{Theorem}[section]
\newtheorem{claim}[theorem]{Claim}
\newtheorem{definition}[theorem]{Definition}
\newtheorem{observation}[theorem]{Observation}
\newtheorem{corollary}[theorem]{Corollary}
\newcommand{\negl}{\mathtt{negl} }
\newcommand{\epoch}{\mathtt{epoch} }
\providecommand{\keywords}[1]
{
  \small	
  \textbf{\textit{Keywords---}} #1
}
\newcommand{\calA}{\cal A}
\title{Mirror Games Against an Open Book Player\footnote{Research supported in part by grants from the Israel Science Foundation (no.\ 2686/20), by the Simons Foundation Collaboration on the Theory of Algorithmic Fairness.}}
\author{
  Roey Magen\footnote{Department of Computer Science and Applied Mathematics, Weizmann Institute of Science, Rehovot, Israel. Email: \href{mailto:roey.magen@weizmann.ac.il}{roey.magen@weizmann.ac.il}.
  } \and
  Moni Naor\footnote{Department of Computer Science and Applied Mathematics, Weizmann Institute of Science, Rehovot, Israel. Incumbent of the Judith Kleeman Professorial Chair.  Email: \href{mailto:moni.naor@weizmann.ac.il}{moni.naor@weizmann.ac.il}.
  }
}
\date{}
\begin{document}
\maketitle
\begin{abstract}
Mirror games were invented by Garg and Schneider (ITCS 2019). Alice and Bob take turns (with Alice s) in declaring numbers from the set $\{1,2, \ldots, 2n\}$. If a player picks a number that was previously played, that player loses the game and the other player wins. If all numbers are
declared without repetition, the result is a draw. Bob has a simple mirror strategy that assures he won't lose the game and requires no memory. On the other hand, Garg and Schneider showed  that every deterministic Alice
requires memory of size that is proportional to $n$ in order to secure a draw.

Regarding probabilistic strategies, previous work  showed that  assuming Alice has access to a {\em secret} random perfect matching over $\{1,2, \ldots, 2n\}$ allows her to achieve a draw in the game w.p.\ at least $1-\frac{1}{n}$ and using only polylog bits of memory.  

We show that the requirement for secret bits is crucial: for an `open book' Alice with no secrets (Bob knows her memory but not future coin flips) and memory of at most $n/4c$ bits for any $c\geq 2$, there is a Bob that wins w.p.\ close to $1-{2^{-c/2}}$.  

\end{abstract}
\keywords{Mirror Games, Space Complexity, Eventown-Oddtown}
\section{Introduction}
In the mirror game, Alice (the first player) and Bob (the second player) take turns picking numbers belonging to the set $\{1,2, \ldots, 2n\}$. A player loses if they repeat a number that has already been picked. After $2n$ rounds, when no more numbers are left to say, then the result of the game is a draw. With perfect memory, both players can keep track of all the numbers that have been announced and avoid losing. 

Bob, who plays second, has a simple deterministic, low memory strategy (which can actually be performed even by humans): Bob fixes any perfect matching on the elements $\{1,2, \ldots, 2n\}$; for example $(1,2),(3,4),...,(2n-1,n)$.  For every number picked by Alice, Bob responds with the matched number. Unless Alice repeats a number, this allows Bob to pick in every turn a number 
that has not appeared so far. Note that Bob needs to remember simply the current value  Alice announced, i.e.\ $O(\log n)$ bits, in order to execute this strategy.

This game was suggested by Garg and Schneider~\cite{mirrorgame}
as a very simple 
example of a mirror strategy (hence the name).  Given the limited  computational  resources mirror strategies  require, the question is when can they be used. In particular whether Alice
has a lower memory strategy as well and if not what is it that breaks the symmetry between the two players. 

What Garg and Schneider 
showed is that every deterministic ``winning” (drawing) strategy
for Alice requires space that is linear in $n$.
I.e.\ there is no mirror like strategy for her.

What about probabilistic strategies? Here we have to consider the model carefully. Suppose that Alice has a secret memory of some bounded size (we call this the `closed book' case) as well as a supply of random bits (that Bob cannot access).  
Garg and Schneider showed a randomized strategy that manages to draw against any strategy of Bob  with probability at least $1-\frac{1}{n}$ and requires $O(\sqrt{n}\log^2 n)$ bits of memory while relying on access to a {\em secret random perfect matching oracle}. 
Feige \cite{feige} showed an improved randomized strategy, under the same settings, that draws with the same probability and needs only $O(\log^3 n)$ bits of memory.
Menuhin and Naor~\cite{MenuhinN22} 
showed that such strategies can be implemented with either $O(n\log n)$
bits of long lasting randomness (in addition to the $O(\log^3 n)$ bits of memory), or to replace the assumptions with cryptographic ones (assuming that Bob cannot break a one-way function) and obtain a low memory strategy for Alice.

\subsection*{An Open Book Alice}
In contrast to the above model, we consider the case where Alice has {\em no secret memory}. 
Alice  has $m<2n$ memory bits, but their content is known to Bob at any point. She  has an ``unlimited" supply of randomness: in each turn Alice can ask for any number of additional random bits, and then she has access to those bits throughout the rest of the game. However, since Alice has no secrets, Bob knows her random bits {\em after she asks for them}\footnote{This is similar to the ``full information model" in distributed computing.}. 
The question we consider is whether Bob has a strategy that forces Alice to lose with high probability when Alice has such limited memory? 

As usual, we assume that Bob is aware of Alice's strategy and creates his own strategy accordingly. The question is whether for any Alice that has sublinear memory there is a Bob that makes her lose with certain probability. As we shall see, this is indeed the case. We will first see a strategy that makes Alice lose with probability about $1/2$ (Theorem~\ref{theorem-alice-loses-wp-half}) and then show how to amplify this to any constant probability of losing (Theorem~\ref{theorem-alice-loses-whp}): any Alice that has memory of size smaller than $n/c$ loses except with probability at most $2^{-\gamma c}$ for some constant $\gamma>0$.

Note that if Alice chooses all her random bits in advance, then  a simple argument shows that she will lose with probability 1: once the random bits are chosen, then her strategy is deterministic and from the work of Garg and Schneider  there is the best choice of Bob's strategy that makes her lose. So the whole difficulty revolves around the issue of Alice choosing the random bits on the fly.

\paragraph{Outline of adversarial strategy:}
The rough outline of the strategy is for Bob to start by a sequence of random moves that do not include any of the numbers selected so far. At some midpoint Bob changes course. He selects a pair of subsets, so that one of them is the correct one played up to this point and the other one is a `decoy'. The pair is selected so that from Alice's point of view these two sets are indistinguishable. Bob continues by avoiding elements that belong to this pair of subsets. Alice has to be the first one to venture into this territory and make a guess who is the real one and who is the decoy. She loses with probability $1/2$.

\section{Combinatorial Preliminaries}
\subsection{Oddtown-Eventown}
Garg and Schneider~\cite{mirrorgame}, in showing their linear lower bound on the space complexity of any deterministic winning or drawing
strategy of Alice,
used the famous  Eventown-Oddtown Theorem of Berlekamp from extremal combinatorics. This theorem  puts a bound on the maximum number of even sets where every pair has an odd intersection: 
\begin{definition}\label{oddtown}
An “Oddtown” is a collection $\cal F$ of subsets of 
$\{1, 2, \ldots, N\}$ where every subset $s \in {\cal F}$ is of even cardinality, but the intersection of every pair of distinct subsets $s_i, s_j \in {\cal F}$ has an odd cardinality. 
\end{definition}

\begin{theorem}\label{Oddtown-theorem} 
Every Oddtown contains at most $N$ subsets. 
\end{theorem}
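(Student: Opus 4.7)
The plan is to prove the bound via the standard linear-algebraic technique over $\mathbb{F}_2$. I would first associate to each $s \in {\cal F}$ its characteristic vector $v_s \in \mathbb{F}_2^N$, and note the basic identity $\langle v_s, v_{s'}\rangle \equiv |s \cap s'| \pmod 2$, where the inner product is taken over $\mathbb{F}_2$. The Oddtown conditions then translate into a clean Gram-matrix description: $\langle v_s, v_s\rangle = 0$ (even cardinality) for each $s$, and $\langle v_{s_i}, v_{s_j}\rangle = 1$ (odd intersection) for each $i \neq j$. The goal reduces to showing that the family $\{v_s : s \in {\cal F}\}$ is linearly independent over $\mathbb{F}_2$, since then $|{\cal F}|$ is bounded by $\dim \mathbb{F}_2^N = N$.

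To establish linear independence, I would suppose a relation $\sum_i \alpha_i v_{s_i} = \mathbf{0}$ with $\alpha_i \in \mathbb{F}_2$, then take the inner product of both sides with $v_{s_j}$ for each fixed $j$. Using the Gram-matrix structure, this collapses to $\sum_{i \neq j} \alpha_i \equiv 0 \pmod 2$ for every $j$, which is the same as saying that every $\alpha_j$ equals the common value $A := \sum_i \alpha_i$. Thus either all coefficients vanish (the trivial relation) or all are equal to $1$.

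The hardest part will be ruling out the remaining ``all-ones'' combination, which would correspond to the specific relation $\sum_{s \in {\cal F}} v_s = \mathbf{0}$ and force $|{\cal F}|$ to be odd. A natural route is the augmentation $v_s \mapsto (v_s, 1) \in \mathbb{F}_2^{N+1}$: the augmented vectors satisfy the classical Berlekamp Oddtown conditions (odd self inner products, even mutual inner products), for which the inner-product argument goes through cleanly without the degenerate case. Combining this reduction with a parity count on the number of sets should close the gap and yield the stated bound $|{\cal F}| \leq N$.
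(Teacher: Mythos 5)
The paper itself does not prove Theorem~\ref{Oddtown-theorem} (it only cites \cite{fox-notes}), so I can only judge your argument on its own merits. Your setup is right, and you correctly isolate the crux: over $\mathbb{F}_2$ the Gram conditions are $\langle v_s,v_s\rangle=0$ and $\langle v_{s_i},v_{s_j}\rangle=1$, and pairing a putative relation $\sum_i\alpha_i v_{s_i}=0$ with each $v_{s_j}$ gives $\alpha_j=A$ for all $j$, so the only possible nontrivial relation is the all-ones one, which forces $m:=|\mathcal{F}|$ to be odd. Be aware, though, that linear independence genuinely fails in extremal examples --- for $N=3$ the family $\{1,2\},\{1,3\},\{2,3\}$ attains $m=N$ and satisfies $\sum_s v_s=0$ --- so the reduction ``it suffices to show independence'' cannot be carried out as stated; the dependent case must be bounded, not excluded.

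The real gap is in how you rule out $m=N+1$. The augmentation $v_s\mapsto(v_s,1)$ does produce a classical Oddtown configuration in $\mathbb{F}_2^{N+1}$ (odd self inner products, even mutual ones) and hence linear independence there, but that only yields $m\le N+1$, which you already had. Your ``parity count'' says a nontrivial relation forces $m$ odd; combined with $m=N+1$ this is a contradiction only when $N$ is odd. For even $N$ --- precisely the case the paper needs, since it applies the theorem with $N=2n$ --- nothing in your sketch excludes $m=N+1$. One standard way to finish: if $m=N+1$, then the all-ones relation must hold (otherwise the vectors are independent and $m\le N$), so any $N$ of the $v_{s_i}$ are independent and span $\mathbb{F}_2^N$; write the all-ones vector $J=\sum_{i\in T}v_{s_i}$. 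Since every $|s_j|$ is even, $\langle J,v_{s_j}\rangle=0$ for all $j$, whereas the right-hand side equals $|T|-1\bmod 2$ for $j\in T$ and $|T|\bmod 2$ for $j\notin T$; because $T$ is a proper nonempty subset of the index set (as $J\ne 0=\sum_i v_{s_i}$), both parities occur, a contradiction. With that step supplied, your proof closes.
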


A proof can be found, for instance, in Fox's lecture notes~\cite{fox-notes}.
We will use the following corollary of the theorem:
\begin{corollary}
\label{pair-claim}
Let $\cal F$ be a collection of subsets of $\{1, 2, \ldots,  N\}$ where every subset $s \in F$ has an even cardinality. For every integer $j$, if $|{\cal F}| > N+j$ then $\cal F$ contains at least $\lceil \dfrac{j}{2} \rceil$ disjoint pairs of subsets from $F$,  where for each pair $(s_i, s_j)$ the union $s_i \cup s_j$ is of even cardinality. We call this a
matching $P$ of pairs.
\end{corollary}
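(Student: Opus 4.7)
The plan is to apply Theorem~\ref{Oddtown-theorem} iteratively in a greedy fashion. First I would translate the conclusion about unions into a statement about intersections: since every $s \in \mathcal{F}$ has even cardinality, inclusion–exclusion gives $|s_i \cup s_j| = |s_i| + |s_j| - |s_i \cap s_j|$, so $|s_i \cup s_j|$ is even if and only if $|s_i \cap s_j|$ is even. Thus it suffices to extract $\lceil j/2 \rceil$ disjoint pairs whose intersection has even cardinality.

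The key building block is an immediate consequence of Theorem~\ref{Oddtown-theorem}: any subfamily $\mathcal{G} \subseteq \mathcal{F}$ with $|\mathcal{G}| > N$ must contain two distinct sets whose intersection is even. Otherwise, every pair in $\mathcal{G}$ would have odd intersection, and since all sets in $\mathcal{G}$ already have even cardinality, $\mathcal{G}$ would be an Oddtown of size exceeding $N$, contradicting the theorem.

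With this in hand, the proof is a straightforward greedy loop. Initialize $\mathcal{F}_0 := \mathcal{F}$ and $P := \emptyset$. While $|\mathcal{F}_k| > N$, use the observation above to pick some pair $(s_i, s_j) \in \mathcal{F}_k$ with $|s_i \cap s_j|$ even, add it to $P$, and set $\mathcal{F}_{k+1} := \mathcal{F}_k \setminus \{s_i, s_j\}$. The resulting collection of pairs is automatically a matching (disjoint as pairs), because each chosen set is removed from the pool before the next step. Each iteration deletes exactly two sets, and we start with $|\mathcal{F}_0| \geq N + j + 1$; after $k$ iterations the residual family has at least $N + j + 1 - 2k$ sets.

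To conclude that at least $\lceil j/2 \rceil$ pairs are produced, I would just check the two parities of $j$ explicitly: when $j = 2m$, after extracting $m-1$ pairs there remain at least $N + 3$ sets, allowing one more extraction to reach $m = \lceil j/2 \rceil$ pairs; when $j = 2m+1$, after extracting $m$ pairs there remain at least $N+2$ sets, yielding an $(m+1)$st pair, and $m+1 = \lceil j/2 \rceil$. The only subtle point is this off-by-one arithmetic around the boundary $|\mathcal{F}_k| > N$ versus $|\mathcal{F}_k| \geq N$; no deeper obstacle arises, since the Oddtown theorem does all the combinatorial work.
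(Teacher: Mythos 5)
Your proof is correct and follows essentially the same route as the paper: reduce even union to even intersection via $|s_i\cup s_j|=|s_i|+|s_j|-|s_i\cap s_j|$, then greedily extract pairs with even intersection, each step justified by the Oddtown theorem applied to the remaining family. The only difference is that you spell out the termination count explicitly (and in fact your bound is slightly conservative — the loop yields even more pairs than $\lceil j/2\rceil$), whereas the paper leaves the arithmetic implicit.
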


\begin{proof}
As long as $\cal F$ contains more than $N$ elements, then by Theorem~\ref{Oddtown-theorem} it contains pair of distinct subsets that have an intersection of even cardinality. Since every subset in $\cal F$ has even cardinality, the union of the pair has even cardinality as well. Put the pair in the matching $P$, delete the pair from $\cal F$ and continue the process.
\end{proof}

\begin{claim}\label{silly-claim}
Let ${\cal A}_1, \ldots, {\cal A}_m$ be events whose union is the entire sample space. 
Let $H_b = \{{\cal A}_i|\Pr({\cal A}_i) > b, i \in \{1, \ldots,m\}\}$ be the set of events with relative ``high" probability.
Let $\calA$ be the event that one of the events from $H_b$ happens.  If $b < \frac{1}{m}$, then
   $\Pr[{\cal A}] \geq 1-(m-1)\cdot b$. 
\end{claim}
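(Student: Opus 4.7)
The plan is a short union-bound argument. First I would rewrite the event $\mathcal{A}$ explicitly as $\mathcal{A} = \bigcup_{\mathcal{A}_i \in H_b} \mathcal{A}_i$, so that its complement satisfies $\mathcal{A}^c \subseteq \bigcup_{\mathcal{A}_i \notin H_b} \mathcal{A}_i$, using the hypothesis that the full collection $\mathcal{A}_1, \ldots, \mathcal{A}_m$ covers the sample space.

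Next I would argue that $H_b$ is nonempty: if every $\mathcal{A}_i$ had $\Pr(\mathcal{A}_i) \leq b$, then by the union bound $\Pr\!\bigl(\bigcup_{i=1}^m \mathcal{A}_i\bigr) \leq m b < 1$, contradicting the assumption that this union is the whole sample space. Hence $|H_b| \geq 1$, so there are at most $m-1$ events outside $H_b$.

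Finally, applying the union bound to the complement,
\[
\Pr[\mathcal{A}^c] \;\leq\; \sum_{\mathcal{A}_i \notin H_b} \Pr[\mathcal{A}_i] \;\leq\; (m - |H_b|)\cdot b \;\leq\; (m-1)\cdot b,
\]
so $\Pr[\mathcal{A}] \geq 1 - (m-1)b$, as claimed.

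There is no real obstacle here; the only subtle point is noticing that the condition $b < 1/m$ is used solely to force $H_b \neq \emptyset$ (so that the factor $m-1$, rather than $m$, appears). Everything else is a direct application of the union bound to the complement.
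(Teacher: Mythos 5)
Your proof is correct, and since the paper states this claim without providing a proof, your union-bound argument (bounding $\Pr[\mathcal{A}^c]$ by the total probability of the at most $m-1$ events outside $H_b$, after using $b<1/m$ to guarantee $H_b\neq\emptyset$) is exactly the intended elementary argument. No gaps.
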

\begin{proof}
Note that the number of events with probability at most $b$ is 
\begin{equation}
       \Pr[{\cal A}] = 1 - \Pr[{\cal A}^c]
       \geq 1-(m-|H_b|)\cdot b \tag{1} \label{tag1}
\end{equation}
Since $b < \frac{1}{m}$, it is not possible that $|H_b| = 0$. 
Thus,
$$
    \Pr[{\cal A}] \geq 1-(m-|H_b|)b \geq 1-(m-1)\cdot b
$$
where the first inequality is from Inequality~(\ref{tag1}). 
\end{proof}

\begin{definition}
A  function $f\colon \mathbb {N} \to \mathbb {R}$ is {\em negligible} if for every positive integer $c$ there exists an integer $N_c$ such that for all $x > N_c$, $   |f (x)|< \frac {1}{x^c}$. We use the term $\negl$ to denote some negligible function. 
\end{definition}

\section{Introduction to the Mirror Game}

\subsection{Strategies and memory}
Alice has $m<n$ bits of memory, and we refer to her state as $x \in \{0,1\}^m$.
In addition, at the end of turn $2i$ (after Bob picks a number) Alice chooses a string of random bits $r_i \in \{0,1\}^\ell$. This string $r_i$ is added to the set of random strings she can access. I.e.\ at the $i$th  round Alice has at her disposal the memory state $x \in \{0,1\}^m$ and the collection of strings $R_A^{i} = (r_1, r_2, \ldots r_i)$. 
The size of $\ell$ can be arbitrary large.
If the value of $i$ is clear from the context, we will simply use $R_A$.

Formally, the description of Alice's strategy  consists of two functions:
\begin{enumerate}
    \item  \textbf{Next move function} $${\mathtt{ nxt}}\colon \{0,1\}^m\times \{0,1\}^{i \cdot \ell} \rightarrow \{1, 2, \ldots, 2n\}$$ receiving the current memory state, the random bits chosen in steps $r_1, \ldots, r_i$ and outputting a number  in $\{1, 2, \ldots, 2n\}$. 
    \item  \textbf{State transition function} 
    $$\mathtt{upd}\colon \{1,2 \ldots, 2n\} \times \{0,1\}^m \times \{1,2, \ldots, 2n\}\times \{0,1\}^{i \cdot \ell} \rightarrow \{0,1\}^m$$ receiving the last move of Bob, the current memory state, turn number $2i+1$, the random bits chosen in steps $r_1, \ldots, r_i$ and assigning a new memory state. 
\end{enumerate}
Note that Bob knows the functions $\mathtt{nxt}$ and $\mathtt{upd}$. Regarding  the random bits of Alice, at round $i$, when Alice receives a new sequence of random bits $r_i$, Bob has access to $r_i$ as well (but not to the future bits, before they are selected).

Let $R_A$ and $R_B$ be the random variables that represent the random bits of Alice and Bob respectively. Note that in the beginning of turn $2i+1$, $R_A^{i} = (r_1,r_2, \ldots, r_i)$ and at that point Bob knows the value of $R_A^{i}$. Let $S^{i}$ be the random variable that represents the subset of numbers that have been picked by either player until turn $i$. 
Similarity, $X_i$ is the random variable that represents the memory state of Alice in the $i$th turn. If the value of $i$  is clear, we will simply use $X$ and $S$.

Given $x \in \{0,1\}^m$, turn $ i\in\{1,2, \ldots,2n\}$ and $R_A$, for a subset $s \subseteq \{1,2, \ldots, 2n\}$  of size $i$  we will consider 
the following notation: 
\begin{itemize}
    \item $\Pr[S=s] := \Pr_{R_B}[\text{set of used numbers is } s|R_A = r_A]$
    \item $\Pr[X_i = x] := \Pr_{R_B}[X_i =x|R_A=r_A]$, and if $i$ is clear we will use $\Pr[X=x]$.
\end{itemize}
Now we will consider the conditional probability that $s$ was the subset of chosen numbers that lead to memory state $x$:
$$
    \Pr[S=s|x] :=  {\Pr}_{R_B}[\text{The set of used numbers is } s|\text{memory state is } x, R_A^i=r_A].
$$

\begin{definition}
\label{def-feasible-sets}
For turn $1 \leq i \leq 2n$, memory state $x \in \{0,1\}^m$ and random strings $r_A$, denote with $\mathcal{F}_x^{r_A}$ the collection of sets 
$s \subseteq \{1,2, \ldots 2n\}$ of size $i$  s.t.\ it is possible that the state of memory of Alice at turn $i$ is $x$ when the set of numbers played is equal to $s$ and Alice's random string is $r_A$. Call those sets the {\em feasible sets}. 
\end{definition}

For ease of notation we sometime use  $\mathcal{F}_x$, instead of  $\mathcal{F}_x^{r_A}$. 

\section{Bob's Strategy Against an Open Book Alice }

We present a strategy for Bob against any Alice with $o(n)$ memory and without secrets. Namely, Bob has access to Alice's memory state, random bits as well as her strategy, namely the state transition function $upd$ and next move function $nxt$, but not to any future random bits.
We will first see in Section~\ref{sec-half} a strategy that makes Alice lose with probability about $1/2$.
We will then see in Section~\ref{sec-amplify} how to
amplify this strategy and reduce Alice's probability of winning to any $1/2^c$ 
at the price of  limiting her memory to be at most $n/ac$ for some constant $a$.


\subsection{Bob Can Win With Probability Close to Half}
\label{sec-half}

We present a strategy for Bob that makes him win with  probability  close to half. Let $1 \leq k \leq n$ be a value that  will be determined later (we will see that $k=0.4n$). Until turn $2k$ Bob chooses a number uniformly  at random from those that have not been picked so far. 

Let $S$ be the random variable that represents the actual set of used numbers and let $x\in\{0,1\}^m$ be the memory state of Alice after turn $2k$. We will show that if Alice has low memory, then with high probability Bob can find a partition of the feasible sets $\mathcal{F}_x^{r_A}$ (Definition~\ref{def-feasible-sets}) into pairs. For every pair $(s_1, s_2)$  in the partition, (i) the two subsets $s_1$ and $s_2$ have similar conditional probabilities that they are the sets actually played given memory state $x \in \{0,1\}^m$ and randomness $r_A$, and  (ii) the cardinality of the union of the pair is even. Namely,
\begin{enumerate}
\item
    $\Pr_{R_B}[S=s_1|X= x, R_A=r_A] \approx \Pr_{R_B}[S=s_2|X=x, R_A=r_A]$.
    \item $|s_1 \cup s_2| \textrm{ is even}$.
\end{enumerate}
Note that this partition does not depend on the actual set of used numbers, just on $R_A$ and $x$. But as we show w.h.p $s$ belongs to some pair in the partition, meaning that there exists $(s_1,s_2)$ in the partition s.t. $S \in \{s_1,s_2\}$.
Bob's strategy after turn $2k$ is just to pick some number that has not been used and that is not in $s_1 \cup s_2$. 
  
We claim that since $|s_1 \cup s_2|$ is even, Alice must be the one that chooses the first value in $s_1\cup s_2$ after turn $2k$. Since $s_1$ and $s_2$ have similar probabilities, even if a ``little birdy"  tells Alice  about $s_1$ and $s_2$, she will not be able to differentiate between them in real-time. Namely, she does not know whether $S = s_1$ or $S = s_2$.
Therefore she loses with probability close to half.

\begin{tcolorbox}
\textbf{Bob's strategy for making Alice lose with probability $1/2$}
\begin{enumerate}
    \item From turn $2$ until turn $2k$: Bob chooses  uniformly at random a number that has not been picked so far.
    \item Following turn $2k$  Bob finds two feasible sets $s_1$ and $s_2$ of size $2k$ with similar conditional probabilities s.t.\ the actual set of number picked so far  $s$ is equal to one of them, and $|s_1 \cup s_2|$ is even. 
\item From turn $2k$ till the end, Bob picks some number that has not been used and that is not in $s_1 \cup s_2$. 
\end{enumerate}
\end{tcolorbox}
Note that after turn $2k$ Bob's strategy can be deterministic. For example pick the smallest number that has not been used and that is not in $s_1 \cup s_2$.   
\begin{claim}\label{claim-upper-bound}
For every subset $s \subseteq \{1,2, \ldots ,2n\}$ and for every assignment $r_A$ to the random bits of Alice, if $|s| = 2k$ then
$$\Pr_{R_B}[S=s | R_A=r_A] \leq {\left(\dfrac{2k}{2n}\right)}^k$$

\end{claim}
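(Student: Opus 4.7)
The plan is to factor $\Pr[S=s \mid R_A = r_A]$ via the chain rule as a product of $k$ per-turn conditional probabilities, and then bound each factor by $2k/(2n)$ using the fact that Bob plays uniformly from the unused numbers.

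For $0 \le i \le k$, let $E_i$ be the event that every number played in the first $2i$ turns belongs to $s$ (with $E_0$ the sure event). Since $|s|=2k$ equals the number of moves played in the first $2k$ turns and all moves are distinct, the event $S=s$ coincides with $E_k$. So
\[
\Pr[S=s \mid R_A=r_A] \;=\; \prod_{i=1}^k \Pr\!\left[E_i \mid E_{i-1},\, R_A=r_A\right].
\]

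To bound the $i$th factor, I would condition on $E_{i-1}$ and $r_A$, so that $2(i-1)$ elements of $s$ have already been used. Alice's move $a_i$ is a deterministic function of $r_A$ and the history, so trivially $\Pr[a_i \in s \mid E_{i-1}, r_A] \le 1$. Conditional on $a_i \in s$, the unused set has $2n-(2i-1)$ elements, of which exactly $2k-(2i-1)$ lie in $s$, and Bob draws uniformly from the unused. Hence
\[
\Pr\!\left[E_i \mid E_{i-1}, r_A\right] \;\le\; \Pr\!\left[b_i \in s \mid a_i \in s, E_{i-1}, r_A\right] \;=\; \frac{2k - (2i-1)}{2n - (2i-1)}.
\]

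The last step is the elementary observation that since $k \le n$, the map $x \mapsto \frac{2k - x}{2n - x}$ is non-increasing, so every factor is at most $\frac{2k}{2n}$. Multiplying the $k$ factors gives the bound $(2k/2n)^k$. The argument has no real obstacle; the only subtle point is the chain-rule step, where one must notice that Alice's move, although a complicated function of the random history, is \emph{deterministic} given $(r_A,\text{history})$, so its contribution to the conditional probability can simply be bounded by $1$ without tracking it any further.
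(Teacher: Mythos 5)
Your proof is correct and follows essentially the same route as the paper's: both decompose $\Pr[S=s\mid R_A=r_A]$ into a product over Bob's $k$ moves, bound the $i$th factor by the probability $\frac{2k-(2i-1)}{2n-(2i-1)}$ that Bob's uniform draw from the unused numbers lands in $s$, and then use monotonicity of $x\mapsto\frac{2k-x}{2n-x}$ to bound each factor by $\frac{2k}{2n}$ (the paper first bounds by $\frac{2k-1}{2n-1}$ and then relaxes to $\frac{2k}{2n}$). Your treatment of the chain rule and of Alice's moves being determined by $(r_A,\text{history})$ is just a more explicit rendering of the same argument.
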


\begin{proof}
For every turn $i \in \{2,4, \ldots, 2k\}$, if all of the numbers picked so far belong to $s$, then the cardinality of the set of numbers that have not been used and belongs to $s$ is exactly $2k-i+1$. Thus the probability that Bob chooses a number from s is exactly $\frac{2k-i+1}{2n-i+1}$ independent from $r_A$. Hence, let $Z=\{2,4,\ldots 2k\}$ and we get:
$$
    \Pr_{R_B}[S=s|R_A=r_A] \leq \displaystyle \prod_{i\in Z}  \dfrac{2k-i+1}{2n-i+1}  
    \leq 
    \prod_{i\in Z}  \dfrac{2k-1}{2n-1} 
    =  {\left(\dfrac{2k-1}{2n-1}\right)}^k \leq {\left(\dfrac{2k}{2n}\right)}^k  
$$
\end{proof}

\begin{corollary}\label{corollary-explicit-upperbound}
For every subset $s \subseteq \{1,2,...,2n\}$ and for every assignment $r_A$ to the random bits
of Alice, if $k = 0.4n$ and $|s| = 2k$ then
$$
\Pr_{R_B}[S=s | R_A=r_A] \leq {0.4}^{0.4n} = {0.4}^{0.2\cdot2n}  \leq  {0.84}^{2n}
$$
\end{corollary}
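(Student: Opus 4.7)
The plan is to apply Claim~\ref{claim-upper-bound} directly with $k=0.4n$ and then verify the arithmetic. Substituting $k=0.4n$ into the bound $\left(\frac{2k}{2n}\right)^k$ gives $\left(\frac{0.8n}{2n}\right)^{0.4n} = 0.4^{0.4n}$, which establishes the first equality in the statement. The middle equality $0.4^{0.4n}=0.4^{0.2\cdot 2n}$ is just a rewriting of the exponent to expose the base that corresponds to one ``Bob turn'' on average.

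The only non-trivial step is the last inequality, $0.4^{0.2\cdot 2n}\le 0.84^{2n}$, which reduces to showing $0.4^{0.2}\le 0.84$. I would verify this by taking logarithms: $0.2\ln 0.4 = -0.2\cdot 0.9163\ldots\approx -0.1833$, while $\ln 0.84\approx -0.1744$, so $0.2\ln 0.4<\ln 0.84$ and exponentiating gives the claim. Equivalently, one can raise both sides to the fifth power and check that $0.4\le 0.84^{5}$; a direct computation yields $0.84^{5}\approx 0.4182>0.4$.

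There is no real obstacle here — the corollary is simply the specialization of Claim~\ref{claim-upper-bound} at the particular choice of $k$ that will be used in the rest of Section~\ref{sec-half}, together with a convenient repackaging of the bound as a quantity raised to the power $2n$ (the total number of turns), which is the form needed for the subsequent union-bound-style arguments over the feasible sets $\mathcal{F}_x^{r_A}$.
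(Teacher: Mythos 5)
Your proposal is correct and matches the paper's intent exactly: the corollary is stated as an immediate specialization of Claim~\ref{claim-upper-bound} at $k=0.4n$, and the paper gives no further argument beyond the numerical check $0.4^{0.2}\leq 0.84$ that you verify. Nothing is missing.
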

These numbers were chosen  for convenience, where the goal is to get an  upper bound on $\Pr_{R_B}[S=s]$ of the form  $\alpha^{2n}$ for some constant $0 < \alpha <1$.

We call a memory state $x$ {\em useful} for $r_A$ if $\Pr_{R_B}[X_{2k}=x | R_A=r_A] > {0.9}^{2n}$. Once again, this number was chosen for convenience. We want to  get a lower bound on $\Pr[X=x|R_A=r_A]$ of the form $\beta^{2n}$ for some constant $0 < \beta <1$  s.t.\ $\alpha<\beta$. 
Let $U_{r_A}\subseteq \{0,1\}^m$ be the set of useful memory for $r_A$.
We argue that w.h.p.\ Alice's memory state in turn $2k$ is useful.

\begin{claim}\label{claim-usfull-whp}
Let $X$ be the random variable representing the memory state after turn $2k$. For any assignment to Alice's coin flips $r_A$, if $m < 0.2n$, then $$\Pr_{R_B}[X \in U_{r_A}|R_A=r_A] \geq  1-\negl(n).$$
\end{claim}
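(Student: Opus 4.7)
The strategy is a straightforward union bound argument: condition on the fixed randomness $r_A$, note there are only $2^m$ possible memory states, and observe that each non-useful state contributes a negligible amount of probability. Since the total probability over all memory states equals $1$, the mass placed on non-useful states is at most $2^m$ times the per-state upper bound $0.9^{2n}$.

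More precisely, I would first write the complement event:
\[
\Pr_{R_B}[X \notin U_{r_A}\mid R_A=r_A] \;=\; \sum_{x \in \{0,1\}^m \setminus U_{r_A}} \Pr_{R_B}[X=x\mid R_A=r_A].
\]
By the definition of $U_{r_A}$, every term in the sum is bounded by $0.9^{2n}$. The number of terms is at most $|\{0,1\}^m| = 2^m$, and the hypothesis $m < 0.2n$ yields $2^m < 2^{0.2n}$. Combining these:
\[
\Pr_{R_B}[X \notin U_{r_A}\mid R_A=r_A] \;\leq\; 2^{0.2n} \cdot 0.9^{2n} \;=\; \bigl(2^{0.2}\cdot 0.81\bigr)^{n}.
\]

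The last step is numerical: $2^{0.2} \approx 1.149$, so $2^{0.2}\cdot 0.81 \approx 0.93 < 1$, which means the bound is of the form $\gamma^n$ for a constant $\gamma<1$. Any such expression decays faster than $1/n^c$ for every positive integer $c$, hence falls under the definition of $\mathtt{negl}(n)$ given earlier. Taking complements gives the claimed lower bound.

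There is essentially no obstacle here --- the argument is just a counting bound combined with the constant gap between the ``useful'' threshold $0.9^{2n}$ and the memory budget $2^m \leq 2^{0.2n}$. The only thing to be careful about is that the calibration of the three constants ($k=0.4n$ for Corollary~\ref{corollary-explicit-upperbound}, the useful threshold $0.9^{2n}$, and the memory bound $0.2n$) is what will ultimately be used in the next step of the argument to distinguish ``typical'' feasible sets from ``atypical'' ones, so I would keep the constants symbolic ($\alpha<\beta<1$ with $2^m \cdot \beta^{2n}\to 0$) at this stage to make it clear where each inequality is being used.
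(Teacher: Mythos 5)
Your argument is correct and is essentially the paper's proof: the paper applies its Claim~\ref{silly-claim}, which is exactly the union bound you carry out directly (at most $2^m < 2^{0.2n}$ non-useful states, each contributing at most $0.9^{2n}$, giving a bound of the form $\gamma^n$ with $\gamma<1$). Your closing remark about keeping the constants symbolic is also in the spirit of how the paper later generalizes this step in Claim~\ref{claim-usefull-whp-2}.
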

\begin{proof}
 Let $A_x$ be the event that $X=x$. Note that for every $x \in \{0,1\}^m$ we get that $\Pr[A_x] \leq {0.9}^{2n} \leq \frac{1}{2^m}$. Then by Claim~\ref{silly-claim}:
$$
    \Pr_{R_B}[x \in U_{r_A}]  \geq 1-(2^m-1)\cdot {0.9}^{2n}  > 1 - {0.97}^{2n} +{0.9}^{2n} = 1 - \negl(n).
$$
where the last inequality is true since $2^m0.9^{2n} \leq {2^{0.2n}}\cdot 0.9^{2n} < 0.97^{2n} $.
\end{proof}

\begin{claim}\label{claim-partition}
For any assignment $r_A$, if $x$ is useful, then Bob can find a subset $W_x^{r_A} \subseteq \mathcal{F}_x^{r_A}$ and a partition $P_x^{r_A}$ of $W_x^{r_A}$ into pairs s.t.:
\begin{enumerate}
    \item $\Pr_{R_B}[ S \in W_x^{r_A}|x\in U_{r_A},R_A=r_A] > 1 - \negl(n)$.
    \item For every pair $(s_1,s_2) \in P_x^{r_A}$ we have that  $|s_1 \cup s_2| $ is even.
    \item  For every pair $(s_1,s_2) \in P_x^{r_A}$ the ratio of their conditional probabilities is close to $1$:  $$\frac{2n}{2n+1} \leq  \dfrac{\Pr_{R_B}[S=s_1|x,R_A=r_A]}{\Pr_{R_B}[S=s_2|x,R_A=r_A]} \leq \frac{2n+1}{2n}$$
\end{enumerate}

\end{claim}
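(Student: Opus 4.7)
The plan is to bucket the feasible sets in $\mathcal{F}_x^{r_A}$ according to the size of their conditional probabilities
$$Q_x(s) \;:=\; \Pr_{R_B}[S=s \mid X=x, R_A=r_A]$$
into geometric ranges of ratio $r := (2n+1)/(2n)$, then apply Corollary~\ref{pair-claim} inside each bucket to pair up the sets. The ratio is chosen so that any two sets drawn from the same bucket automatically satisfy condition~(3), and the fact that every feasible $s$ has the even cardinality $|s|=2k$ both makes Corollary~\ref{pair-claim} applicable and delivers condition~(2) for free. All the real work will go into condition~(1).

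First I would obtain a uniform upper bound on $Q_x$: by Bayes,
$$Q_x(s) \;\leq\; \frac{\Pr_{R_B}[S=s \mid R_A=r_A]}{\Pr_{R_B}[X=x \mid R_A=r_A]} \;<\; \frac{0.84^{2n}}{0.9^{2n}} \;=\; C^{2n}, \qquad C := 0.84/0.9 < 1,$$
using Corollary~\ref{corollary-explicit-upperbound} in the numerator and the usefulness of $x$ in the denominator. Then I would fix a threshold $t := 2^{-3n}$, discard from $\mathcal{F}_x^{r_A}$ every set with $Q_x(s)<t$, and note that since $|\mathcal{F}_x^{r_A}| \leq \binom{2n}{2k} \leq 2^{2n}$, the total conditional mass of discarded sets is at most $2^{2n}\cdot t = 2^{-n}$.

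Next, partition the surviving sets into buckets $B_i := \{s : Q_x(s) \in [tr^i, tr^{i+1})\}$ for $i = 0,1,\ldots,M$ with $M = O(\log_r(C^{2n}/t)) = O(n^2)$. Inside each bucket every set has even cardinality $2k$, so Corollary~\ref{pair-claim} produces a matching that pairs up all but at most $2n$ of them into pairs with even union. Let $W_x^{r_A}$ be the union of all matched sets across buckets and $P_x^{r_A}$ the resulting matching; then condition~(2) is inherited directly from Corollary~\ref{pair-claim}, and condition~(3) holds because $Q_x(s_1)/Q_x(s_2) \in (1/r,\, r)$ whenever $s_1$ and $s_2$ sit in a common bucket.

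The main obstacle is condition~(1). The unpaired mass contributed by bucket $B_i$ is at most $2n\cdot tr^{i+1}$, and the geometric sum
$$\sum_{i=0}^{M} 2n \cdot tr^{i+1} \;\leq\; 2n \cdot tr \cdot \frac{r^{M+1}-1}{r-1} \;=\; O\!\bigl(n^2 \cdot C^{2n}\bigr),$$
where I used $r-1 = 1/(2n)$ and $tr^{M+1} = O(C^{2n})$. Combined with the $2^{-n}$ of discarded mass, this gives $\Pr_{R_B}[S \notin W_x^{r_A} \mid X=x, R_A=r_A] = \negl(n)$. The delicate calibrations are the bucket ratio $(2n+1)/(2n)$, which is forced by condition~(3), and the threshold $2^{-3n}$, which is small enough that the discarded mass is exponentially negligible yet large enough that $M$ stays polynomial, so that the $n^2$ factor from the geometric series remains dominated by the $C^{2n}$ tail.
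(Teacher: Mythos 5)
Your proposal is correct and follows essentially the same route as the paper: bucket the feasible sets by conditional probability into geometric layers of ratio $(2n+1)/(2n)$, apply Corollary~\ref{pair-claim} within each layer to pair all but at most $2n$ sets into even-union pairs, discard an exponentially small tail of low-probability sets, and bound the leftover mass by (number of layers)$\times 2n \times$ (the $({0.84}/{0.9})^{2n}$ per-set bound). The only differences are cosmetic parameter choices (threshold $2^{-3n}$ vs.\ $2^{-4n}$, $O(n^2)$ vs.\ $(2n)^3$ layers) and your slightly sharper geometric-sum bound on the unmatched mass, neither of which changes the argument.
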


\begin{proof}
The idea is to partition $\mathcal{F}_x^{r_A}$ into layers of the possible subsets that lead to memory state $x$ according to the values of their conditional probability. The pair $(s_1,s_2)\in P_x^{r_A}$, only if $\Pr[S=s_1 |x,r_A]$ and $\Pr[S=s_2 | x,r_A]$ belong to the same layer and $|s_1 \cup s_2|$ is even. 
We consider ${(2n)}^{3}$ different layers. For $j \in \{1, \ldots, (2n)^3\}$ let the $j$th  interval of probabilities  be
$$I_{j} = \left[{\left(\dfrac{{2n}}{{2n}+1}\right)}^j  \ , \ {\left(\dfrac{2n}{2n+1}\right)}^{j-1} \right].$$
Every layer $L_j$ represents the collection 
of subsets whose conditional probability given memory state $x$ and  $r_A$ is in the  interval $I_j$. Thus,  $L_j = \{s \in \mathcal{F}_x^{r_A} | \Pr_{r_B}[S=s|X=x,r_A] \in  I_j \}$. Note that if $s_1$ and $s_2$ in $L_j$,
then $$\frac{2n}{2n+1} \leq  \frac{\Pr_{r_B}[S=s_1|x,r_A]}{\Pr_{r_B}[S=s_2|x,r_A]} \leq \frac{2n+1}{2n}.$$

Note that every subset in $L_j$ has even cardinality. Thus, if layer $L_j$ contains more then $2n$ subsets, then by Claim~\ref{pair-claim} it contains at least 
$\frac{|L_j|-N}{2}$ pairs with an even intersection. Call this set of pairs $P_j$.
Let $P_x^{r_A}$ be the collection of all pairs from $P_1, P_2, \ldots, P_{(2n)^3}$. Denote the collection of all subsets that belong to some pair in $P_x^{r_A}$ as $W_x^{r_A}$. Namely,
$$    P_x^{r_A} := \{(s_1,s_2) \in P_j|j\in [{(2n)}^3]\}, \ \ W_x^{r_A} := \{s\subseteq\{1,2,...,2n\}|\exists s',(s,s')\in P_x^{r_A}\}.$$

We will see that $P_x^{r_A}$ and $W_x^{r_A}$ meet all the three conditions of the claim:

\begin{enumerate}
    \item Since $x$ is useful and by Corollary~\ref{corollary-explicit-upperbound} we get that for every s $\in \mathcal{F}_x^{r_A}$:
$$
\Pr_{R_B}[S=s|X=x,R_A=r_A] \leq \dfrac{\Pr[S=s|R_A=r_A]}{\Pr[X=x|R_A=r_A]}\leq \dfrac{0.84^{2n}}{0.9^{2n}} < 0.934^{2n}
$$
Therefore for every $F\subseteq \mathcal{F}_x^{r_A}$, if $F$ contains at most a  polynomial number of subsets, 
then $\Pr[S \in F|X=x,r_A]$ is negligible. There are ${(2n)}^3$ layers, and in every layer $i$, there are at most $2n$ subsets that do not belong to $P_i$. Thus we get at most ${(2n)}^4$ subsets that not belong to $P_x^{r_A}$. 
In addition  define: 
$$L_x = \{s\in \mathcal{F}_x^{r_A} | \ \Pr[S=s|x,r_A] < \frac{1}{2^{4n}}\}. 
$$
Since the cardinality of $L_x$ is at most $2^{2n}$, we get that $$\Pr[S \in L_x | X=x,r_A] < 2^{2n}\cdot \frac{1}{2^{4n}} = 2^{-2n}.$$

We choose $(2n)^3$ layers, since ${\left(\frac{2n}{2n+1}\right)}^{(2n)^3-1} \leq \frac{1}{2^{4n}}$ for large enough $n$. Overall $W_x^{r_A}$ contains all of the subsets from $\mathcal{F}_x^{r_A}$, except some of the subsets in $L_x$ and at most ${(2n)}^4$ other subsets, then the claim is follows. 

\item Correct by the construction of $P_x^{r_A}$. 
\item By construction for every pair $(s_1,s_2) \in P_x^{r_A}$ there exists $j$ s.t $(s_1,s_2) \in L_j$.

\end{enumerate}
\end{proof}

\begin{observation}
\label{observation-alice-picks-first}
Since $|s_1 \cup s_2|$ is even, then Alice is the first to pick an element of $s_1\cup s_2$ after turn $2k$.
\end{observation}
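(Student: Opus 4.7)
My plan is a short parity argument exploiting two facts: by construction Bob never picks from $s_1\cup s_2$ after turn $2k$, and the cardinality of $\overline{s_1\cup s_2}$ is even.

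I would first record the state after turn $2k$. The actual played set $s$ equals $s_1$ or $s_2$, so all $2k$ played numbers lie inside $s_1\cup s_2$, and the $2n-|s_1\cup s_2|$ numbers outside $s_1\cup s_2$ are all unplayed. Since $2n$ is even and $|s_1\cup s_2|$ is even by hypothesis, this complement has even size. By Bob's strategy, every one of Bob's picks from turn $2k+1$ onward comes from this complement.

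The main step is then to show Alice is forced to be the first to venture into $s_1\cup s_2$. Suppose she also tries to avoid $s_1\cup s_2$ for as long as possible. Then each post-$2k$ turn consumes exactly one element of $\overline{s_1\cup s_2}$. Because this complement has even cardinality, its last element is consumed on Bob's (even) turn $2k+(2n-|s_1\cup s_2|)$. The next turn is Alice's, and since no unplayed element of $\overline{s_1\cup s_2}$ remains, she is forced either to pick an element of $s_1\cup s_2$ or to repeat a previously-picked number (and immediately lose). In either case Bob has never picked from $s_1\cup s_2$ since turn $2k$, so Alice is first. And if Alice chose to enter $s_1\cup s_2$ earlier than this forced moment, she is trivially the first, since Bob never does.

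There is no real obstacle here beyond tracking parities correctly. The essential point is that $|s_1\cup s_2|$ being even is exactly what synchronizes the exhaustion of $\overline{s_1\cup s_2}$ with the end of Bob's turn, thereby leaving Alice to move first into the pair; had $|s_1\cup s_2|$ been odd, the parity would flip and Bob would be the one forced into $s_1\cup s_2$ first, which is why this parity condition was built into the construction of $P_x^{r_A}$.
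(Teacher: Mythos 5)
Your parity argument is correct and is exactly the reasoning the paper intends (the observation is stated without proof, and the surrounding discussion only asserts the parity claim): since $s\subseteq s_1\cup s_2$ and $|s_1\cup s_2|$ is even, the complement $\overline{s_1\cup s_2}$ has even size and is entirely unplayed at turn $2k$, so if both players avoid $s_1\cup s_2$ it is exhausted on one of Bob's (even) turns, leaving Alice to be the first forced into $s_1\cup s_2$ (or to repeat and lose outright). Your write-up just fills in the details the paper leaves implicit; no gap.
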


Now we will show that even if from turn $2k$ and on Alice has an unlimited size memory and randomness, and even if Alice knows $W_x^{r_A}$ for every memory state $x$, she still loses with probability close to half.

\begin{theorem} \label{theorem-alice-loses-wp-half}
If Alice has no secrets, and memory of size at most $0.2n$ bits, then there is a strategy for Bob where Alice  loses with probability at least 
   $\dfrac{1}{2} \cdot \dfrac{2n}{2n+1} -\negl(n)$ 
\end{theorem}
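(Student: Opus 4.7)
The plan is to glue the three preceding claims into a lower bound on Alice's losing probability. I first define the good event $E$ that after turn $2k$ the memory state $x$ is useful for $r_A$ and the actual set $s$ belongs to some pair $(s_1,s_2) \in P_x^{r_A}$. Claim~\ref{claim-usfull-whp} bounds the first conjunct by $1-\negl(n)$ uniformly in $r_A$, and part~1 of Claim~\ref{claim-partition} bounds the second conjunct, conditional on the first, by $1-\negl(n)$; a union bound yields $\Pr[E] \geq 1 - \negl(n)$.

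Conditioning on $E$, I reduce the remainder of the game to a single binary guess. By Observation~\ref{observation-alice-picks-first}, Alice is forced to be the first player to pick an element of $s_1 \cup s_2$ after turn $2k$: Bob's strategy never touches $s_1 \cup s_2$, and the even parities of $2k$ and $|s_1 \cup s_2|$ make the ``outside'' pool of $2n - |s_1 \cup s_2|$ numbers even, so once both players drain it Alice is on move again. A pick from $s_1 \cap s_2 \subseteq s$ is an immediate repeat, so Alice's best move is some $a \in s_1 \triangle s_2$, and this pick repeats iff $s = s_i$ for the $i$ with $a \in s_i$. Her conditional losing probability is therefore at least $\min(q_1, q_2)$, where $q_i$ is her posterior that $s = s_i$ at the moment of the pick.

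The key analytic step is to show that $q_1/q_2$ equals the prior ratio, so that part~3 of Claim~\ref{claim-partition} applies. Alice's view at the decisive move augments $(x, r_A)$ only with the post-$2k$ transcript; throughout that window Alice's moves lie outside $s_1 \cup s_2$, and, since $s \subseteq s_1 \cup s_2$, Bob's deterministic response (the smallest unused number outside $s_1 \cup s_2$) is the same function of the transcript under either hypothesis $s = s_1$ or $s = s_2$. The transcript therefore contributes equal likelihood, and $q_1/q_2 = \Pr[S = s_1 \mid x, R_A]/\Pr[S = s_2 \mid x, R_A] \in [\tfrac{2n}{2n+1}, \tfrac{2n+1}{2n}]$. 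Renormalizing gives $\min(q_1, q_2) \geq \tfrac{2n}{4n+1} \geq \tfrac{1}{2} \cdot \tfrac{2n}{2n+1}$, and multiplying by $\Pr[E] \geq 1-\negl(n)$ yields the stated bound. I expect the main obstacle to be pinning down this ``no-extra-information'' step, ideally via a coupling of the post-$2k$ transcripts under the two hypotheses; once that is done the rest is simple bookkeeping on already-established claims.
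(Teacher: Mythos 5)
Your proposal is correct and follows essentially the same route as the paper: useful memory states (Claim~\ref{claim-usfull-whp}) plus the pairing (Claim~\ref{claim-partition}) give a good event of probability $1-\negl(n)$, parity forces Alice to be the first to enter $s_1\cup s_2$, and the near-unit ratio of the two conditional probabilities bounds her success at the decisive guess by roughly $\frac{1}{2}$. Your explicit likelihood argument that the post-$2k$ transcript is identical under both hypotheses (so the posterior ratio at the decisive move equals the prior ratio given $x$ and $r_A$) is in fact a more careful justification of the step the paper dispatches with an informal ``information'' remark at Inequality~(1).
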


\begin{proof}
Let $r_A$ be assignment to the random bits of Alice. If a memory state $X$ is useful for $r_A$ and $S \in W_x^{r_A}$, then there exists $(s_1,s_2)\in P_x^{r_A}$ s.t.\ $S\in (s_1,s_2)$, $|s_1 \cup s_2|$ is even and $$\frac{2n}{2n+1} \leq  \frac{\Pr[S=s_1|x]}{\Pr[S=s_2|x]} \leq \frac{2n+1}{2n}$$ 

From turn $2k$, Bob chooses a number that has not been used and not in $s_1 \cup s_2$. Then by Observation~\ref{observation-alice-picks-first} the first one after turn $2k$ to pick a number from $s_1 \cup s_2$ is Alice. Intuitively, since $s_1$ and $s_2$ have similar conditional probability given $X$ and $r_A$, Alice cannot determine whether $S=s_1$ or $S=s_2$. Then the best choice  Alice can make the first time she picks a number from $s_1 \cup s_2$, is to choose a number from the set with the lowest conditional probability. But even then, she loses w.p.\ similar to half. 

Formally, let $a$ be the first number from $s_1 \cup s_2$ that Alice picks after turn $2k$.
We want to analyze:
$$
    \Pr_{R_B}[a \in S |X \textrm{ is useful}, S\in W_X^{r_A},R_A=r_A]
$$
Note that if $a\in S$, then Alice loses.
Indeed, for any randomness $r_A$ and useful memory state $x\in U_{r_A}$ :
\begin{align*}
    &\Pr_{r_B}[a \in S |X = x, S\in \{s_1,s_2\},(s_1,s_2)\in P_x^{r_A},R_A=r_A]  \\ \tag{1} \label{eq-information}
    &\geq \min_{s'\in \{s_1,s_2\}} \Pr_{r_B}[S = s' |X = x, S\in \{s_1,s_2\},(s_1,s_2)\in P_x^{r_A},R_A=r_A] \\
    &= \min_{s'\in \{s_1,s_2\}} \dfrac{\Pr_{r_B}[S = s' |X = x,R_A=r_A]}{\Pr_{r_B}[S \in \{s_1,s_2\}|X = x,R_A=r_A]} \\
    &\geq  \min_{s'\in \{s_1,s_2\}} \dfrac{\Pr_{r_B}[S = s' |X = x,R_A=r_A]}{\Pr_{r_B}[S = s_1|X = x,R_A=r_A] + \Pr_{r_B}[S = s_2|X = x,R_A=r_A]}    \\
    &\geq 
    \dfrac{\left(\frac{2n}{2n+1}\right)^{j}}{2 \left( \frac{2n}{2n+1}\right)^{j-1}}  \tag{2} \label{eq-same-layer} \\
    &=  \dfrac{2n}{2n+1} \cdot \dfrac{1}{2} \tag{3} \label{eq-close-to-half} 
\end{align*}
Where all the probabilities are over Bob's coins $R_B$.
Inequality~\ref{eq-information} is true since given $S \in (s_1,s_2)$, without knowing S, decide if $S=s_1$ or $S = s_2$ is hard in terms of information at least as choosing number from S. Inequality~\ref{eq-same-layer} is true by property number 3 of Claim~\ref{claim-partition}.

By Claim \ref{claim-usfull-whp} the memory state $x$ in turn $2k$ is useful with probability at least $1-\negl(n)$. By Claim~\ref{claim-partition}(first property) $S \in W_x^{r_A}$ with probability of at least $1-\negl(n)$. Finally, by Equation~\ref{eq-close-to-half} we get that Alice loses w.p.\ at least
$    \dfrac{2n}{2n+1} \cdot \dfrac{1}{2} -\negl(n). $


\end{proof}

\subsection{Amplifying  the Probability that Bob Wins}
\label{sec-amplify}

For simplicity, we assume that $n$ is even. Let $c \geq 2$ s.t.\ $n/c$ is even as well. We will show  a strategy for Bob where he wins with  probability about $1-{(\frac{1}{2})}^{c/2}$ against any open book Alice with bounded memory. The idea is to use similar arguments to the previous section, but instead of finding only one pair of subsets with similar conditional probability and an even union, we would like to find a collection of such pairs.
We define $c/2$  breakpoints in time where each breakpoint represents some even numbered turn. Each breakpoint and memory state $x$, define a new partition to pairs of subsets (where the subsets are possible one that could have been used since the last breakpoint). The subsets are of size $n/c$ and the matched subsets  should have similar conditional probabilities. 

For  $1 \leq t \leq c/2$ and a set $s$ of the numbers used between breakpoint $t-1$ and breakpoint $t$, we will show that w.h.p.\ $s$ belongs to some pair of subsets $(s_1^t,s_2^t)$ in the partition of feasible subsets for the given memory and Alice's coins; as before, $s_1^t$ and $s_2^t$ have similar conditional probability and an even union. Then Bob adds this pair to the collection of pairs that he already has from previous breakpoints. He uses this collection of pairs in order to define his winning strategy against  Alice. Let us formalize  the foregoing discussion. 

\textbf{Breakpoint} number $t$ for $t\in \{0,1, \ldots,c/2\}$ is chosen at turn number $k_t := t \cdot \frac{n}{c}$. Since $n/c$ is even then $k_t$ is even.

The span of turns between two successive breakpoints is called an $\epoch$. In particular, for $t\in\{1, \dots, c/2\}$, the span of $n/c$ turns between turns $k_{t-1}+1$ and $k_{t}$ is called a $t-\epoch$ (note that the epoch ends at the $t$th breakpoint.)

At the end of each breakpoint $t \in \{1, \ldots, c/2\}$, Bob comes up with new pair $(s_1^t,s_2^t)$. We will show later how Bob finds these pairs. Denote with $C_t = (s_1^1,s_2^1),...,(s_1^{t},s_2^{t})$ the collection of pairs that Bob comes up until breakpoint number $t$.  
Let $C_0 := \emptyset$ and $C_t := C_{t-1} \cup \{(s_1^t,s_2^t)\}$.

\begin{tcolorbox}

\textbf{Bob's amplified strategy:}

 {\bf Phase 1}
\begin{itemize}
\item \textbf{From turn $2$ Until turn $n/2$}: 
during the $t$th $\epoch$ Bob chooses {\em uniformly at random} among the elements that do not belong to any subset $s$ that is part of a pair in $C_{t-1}$.  Namely, from 
$$ \{1, \ldots, 2n\}\backslash \bigcup \{s| s \in \{s_1^j, s_2^j\} \text{ and } (s_1^j,s_2^j)\in C_{t-1}\}$$

\item Let $C = C_{c/2} = \{(s_1^1,s_2^1),...,(s_1^{c/2},s_2^{c/2})\}$ be the collection of pairs of subsets from the first $n/2$ turns. 
For any turn  $n/2 < i \leq 2n$, let 
$C_i^*$ be all pairs $(s_1^t,s_2^t) \in C$ such that no member of  $s_1^t\cup s_2^t$ was played from 
breakpoint $t$ until turn $i$. 
\end{itemize}

\noindent
{\bf Phase 2}
\begin{itemize}

\item \textbf{ For every turn $i \in \{n/2+2,n/2+4,...,2n\}$}, Bob picks  some number that has not been used and not in
$$\bigcup \{s| s \in \{s_1^j, s_2^j\} \text{ and } (s_1^j,s_2^j)\in C^*_i\}$$
\end{itemize}
\end{tcolorbox}

We still need to specify how Bob sets the pairs $(s_1^t,s_2^t)$ in each breakpoint $t$, and what is the meaning of similar conditional probability in this case.

\begin{figure}[th]
\fbox{\includegraphics[width=14cm,height=3.2cm]{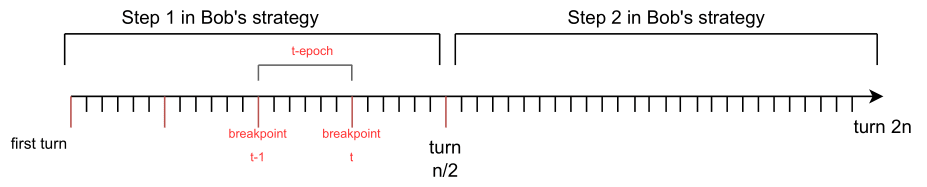}}
\caption{The red lines represent the breakpoints, the black lines represent the turns throughout the game. Not to scale.}
\centering
\end{figure}

Let $X^t$ be the random variable that represents the memory state of Alice at  breakpoint number $t$, 
let $S^t$ be the r.v.\ that represents the set of numbers that have been used during the  $t$th $\epoch$, let $R_A^t$ be the r.v.\ that represents the randomness of Alice until breakpoint number $t$ and
let $H^t$ be the r.v.\ that represent all the history of the game until breakpoint number $t$.
Namely,

  $$H^t := (S^1,...,S^t,X^1,...,X^t,C^t)$$

Note that $H_t$ represents the set of numbers that were played, the memory states of Alice in each breakpoint and the pairs that Bob chooses, until break point number t.

Given history $h^{t-1}$ and randomness $r_A$. Denote with $\mathcal{F}_{x^t}^{h^{t-1},r_A}$ the collection of feasible sets 
$s \subseteq \{1, \ldots, 2n\}$ of size $n/c$  s.t.\ it is possible that the set of numbers picked during the $t$th epoch was $s$ given all the history $h^{t-1}$ until breakpoint $t-1$ and Alice's random string in breakpoint number $t$ is $r_A$.

For simplicity, we use the notation $\mathcal{F}_x^t$, instead of $\mathcal{F}_{x^t}^{h^{t-1},r_A^t}$. We want to find a partition into pairs $P_t$ of $\mathcal{F}_x^t$ that includes most of the subsets in $\mathcal{F}_x^t$. 
First we show an upper bound for any set $s$ on the probability that the set  played in the $(t)-\epoch$ is $s$. The number of values from $\{1, \ldots 2n\}$ that are not in $C_{t-1}$ is at least $n$. Therefore,  in each turn during  epoch $t$, Bob has at least 
$n$ numbers from which he chooses one  at random one in Phase 1 of his strategy. If size of $s$ is $n/c$, then, similarly to Claim~\ref{claim-upper-bound}, we get
$$    \Pr_{R_B}[S^t = s|R_A=r_A,H^{t-1}=h^{t-1}] \leq {\left(\dfrac{n/c}{n}\right)}^{n/2c} = {\left(\dfrac{1}{c}\right)}^{2n/4c}$$ 
Let $\alpha = \left(\frac{1}{c}\right)^{1/4c}$. Note that $\alpha < 1$. 
Let $\alpha < \beta< 1$.  We call a memory state $x$ \textbf{useful} if $\Pr_{R_B}[X^t = x|H^{t-1}=h^{t-1},R_A=r_A] > {\beta}^{2n}$. 
We want to show that w.h.p.\ Alice's memory state in each breakpoint is useful. 
Choose $\delta \in (0,1)$ s.t.\ $2^{\delta} \cdot \beta < 1$. Recall that the motivation to define a useful memory state is to have an upper bound on the conditional probability of any set of numbers to be the one used in the $t-\epoch$ over Bob's coins.

Let $U_t^{r_A}\subseteq \{0,1\}^m$ be the set of useful memory states.
\begin{claim}\label{claim-usefull-whp-2}

Let $x$ be the memory state after turn $k_t$ (breakpoint $t$). For every randomness $r_A$ and history $h^{t-1}$, if $m < \delta \cdot 2n$, then $\Pr_{R_B}[ X \in U_t^{r_A}|R_A=r_A,H^{t-1} = h^{t-1}] \geq  1-\negl(n)$.
\end{claim}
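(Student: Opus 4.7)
The plan is to mimic the proof of Claim~\ref{claim-usfull-whp} almost verbatim, condition on the history $H^{t-1}=h^{t-1}$ and the randomness $R_A=r_A$, and then apply Claim~\ref{silly-claim} to the partition of the conditioned sample space induced by Alice's memory state at breakpoint $t$.

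Concretely, for each $x\in\{0,1\}^m$ let $\mathcal{A}_x$ be the event $\{X^t = x\}$ in the probability space where $R_A$ and $H^{t-1}$ are fixed and only $R_B$ is random. The events $\{\mathcal{A}_x\}_{x\in\{0,1\}^m}$ partition this conditional sample space, so Claim~\ref{silly-claim} applies with the number of events being $2^m$ and the threshold $b=\beta^{2n}$; the ``high probability'' events are precisely those indexed by useful memory states, so their union is the event $\{X\in U_t^{r_A}\}$.

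The only thing to verify is the hypothesis $b<1/2^m$ of Claim~\ref{silly-claim}. Since by assumption $m<\delta\cdot 2n$, we have $2^m\cdot\beta^{2n}\leq (2^\delta\beta)^{2n}$, and by the choice of $\delta$ we have $2^\delta\beta<1$, so $2^m\beta^{2n}<1$, i.e.\ $b<1/2^m$. Applying Claim~\ref{silly-claim} then yields
\[
\Pr_{R_B}\bigl[X\in U_t^{r_A}\,\big|\,R_A=r_A,\,H^{t-1}=h^{t-1}\bigr]\ \geq\ 1-(2^m-1)\cdot\beta^{2n}\ \geq\ 1-(2^\delta\beta)^{2n},
\]
and the right-hand side is $1-\negl(n)$ since $2^\delta\beta<1$ is a constant strictly less than $1$.

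There is no real obstacle here: the whole argument is a conditional restatement of Claim~\ref{claim-usfull-whp}, with the concrete constants $0.9$ and $0.2$ replaced by the parameters $\beta$ and $\delta$ chosen in the setup so that $2^\delta\beta<1$. The only mildly delicate point is conceptual rather than technical, namely making sure that all probabilities appearing in the argument are understood as taken over $R_B$ alone with $R_A=r_A$ and $H^{t-1}=h^{t-1}$ fixed, so that the memory states genuinely partition the remaining sample space and Claim~\ref{silly-claim} may be invoked.
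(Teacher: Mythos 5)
Your proposal is correct and matches the paper's own proof, which likewise conditions on $R_A=r_A$ and $H^{t-1}=h^{t-1}$, applies Claim~\ref{silly-claim} to the $2^m$ events $\{X^t=x\}$ with threshold $b=\beta^{2n}$, and uses $m<\delta\cdot 2n$ together with $2^{\delta}\beta<1$ to get $1-(2^m-1)\beta^{2n}\geq 1-\negl(n)$. Your write-up is in fact slightly more careful than the paper's, which leaves the verification of $b<1/2^m$ implicit.
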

\begin{proof}

By Claim~\ref{silly-claim}:
$$
    \Pr[x  \in U_t^{R_A}|R_A=r_A,H^{t-1} = h^{t-1}]  \geq 1-(2^m-1)\cdot {\beta}^{2n} 
    > 1 - {(2^{\delta}\beta)}^{2n} +{\beta}^{2n} 
$$
which is $ 1 - \negl(n)$
\end{proof}
Recall that the motivation to define a useful memory state is to have an upper bound on the conditional probability of set of used numbers during $(t)-\epoch$ over Bob's coins.
In our case, $ \Pr[S^t = s|X^t = x,H^{t-1}=h^{t-1},R^A=r_A,x\in U_t^{r_A}]$ is bounded from above by  $a^{2n}$, for some constant $a$ that is  smaller than one. If $x$ is useful we get, 
\begin{align*}
    &\Pr[S^t = s|X^t = x,H^{t-1}=h^{t-1},R_A=r_A] \\ 
    &\leq \frac{\Pr[S^t = s|R_A=r_A, H^{t-1}=h^{t-1}]}{\Pr[X^t = x|R_A=r_A, H^{t-1}=h^{t-1}]} \\
    &\leq (\frac{\alpha}{\beta})^{2n}
\end{align*}
Thus Claim~\ref{claim-partition} holds in this case as well. Namely,
\begin{claim} \label{claim-partition-2}
For any assignment $r_A$. If $x$ is useful, then Bob can find a subset $W_x^t\subseteq \mathcal{F}_x^{t}$ and a partition $P_x^t$ of $W_x^t$ into pairs s.t:
\begin{enumerate}
    \item $\Pr[S^t \in W_x^t|X^t = x,x\in U_t^{r_A},H^{t-1}=h^{t-1}] > 1 - \negl(n)$
    \item For every pair $(s_1,s_2) \in P_x^t$, $|s_1 \cup s_2| $ is even.
    \item For every pair $(s_1,s_2) \in P_x^t$: 
    $$\frac{2n}{2n+1} \leq  \dfrac{\Pr[S^t = s_1|X^t = x, x \in U_t^{r_A},H^{t-1}=h^{t-1}]}{\Pr[S^t = s_2|X^t = x, x \in U_t^{r_A},H^{t-1}=h^{t-1}]} \leq \frac{2n+1}{2n}$$
\end{enumerate}
\end{claim}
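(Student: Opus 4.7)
The plan is to mirror the proof of Claim~\ref{claim-partition}, with every probability now conditioned on the history $h^{t-1}$ and with the useful-memory threshold parametrized by the constants $\alpha$ and $\beta$ introduced just before the claim. The structural property that makes Corollary~\ref{pair-claim} applicable is preserved: every $s \in \mathcal{F}_x^t$ has $|s| = n/c$, which is even by the standing assumption on $c$, so each layer of feasible sets consists entirely of sets of even cardinality.

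First I would reproduce the layering construction. For $j \in \{1,\ldots,(2n)^3\}$, set
$$
L_j \;=\; \Bigl\{\, s \in \mathcal{F}_x^t \;:\; \Pr[S^t = s \mid X^t=x,\,H^{t-1}=h^{t-1},\,R_A=r_A] \in I_j \,\Bigr\},
$$
with the same intervals $I_j = \bigl[\bigl(\tfrac{2n}{2n+1}\bigr)^j,\bigl(\tfrac{2n}{2n+1}\bigr)^{j-1}\bigr]$ as in Claim~\ref{claim-partition}. Sets falling in the same layer automatically satisfy the ratio bound of property~3. Apply Corollary~\ref{pair-claim} with $N = 2n$ inside each $L_j$ to extract a set $P_j$ of at least $\lceil (|L_j|-2n)/2 \rceil$ disjoint matched pairs whose unions have even cardinality. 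Define $P_x^t := \bigcup_j P_j$ and let $W_x^t$ be the union of all the elements appearing in these pairs; then properties~2 and~3 hold by construction.

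The only step requiring care is property~1. The sets excluded from $W_x^t$ split into two groups. The unmatched remainders from each layer contribute at most $2n$ sets per layer, hence at most $(2n)^4$ sets in total; since $x$ is useful, the upper bound derived just before the claim gives $\Pr[S^t = s \mid X^t=x,\,H^{t-1}=h^{t-1},\,R_A=r_A] \le (\alpha/\beta)^{2n}$, and because $\beta > \alpha$ this is exponentially small, so the contribution of these polynomially-many sets is negligible. The remaining ``too-light'' sets are those whose conditional probability sits below $\bigl(\tfrac{2n}{2n+1}\bigr)^{(2n)^3-1} \le 2^{-4n}$ (this inequality for large $n$ is the reason the layer count is set to $(2n)^3$); there are at most $2^{2n}$ such subsets, contributing at most $2^{2n} \cdot 2^{-4n} = 2^{-2n}$ in total, again negligible. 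Summing both contributions yields property~1.

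The main obstacle, and the only genuinely new ingredient relative to Claim~\ref{claim-partition}, is simply threading the history conditioning $H^{t-1}=h^{t-1}$ through every probability and verifying that the ratio $\alpha/\beta$ plays the role that $0.84/0.9$ played before; once $\beta$ is chosen so that $2^\delta \beta < 1$ for the memory bound, and given that $\alpha < \beta$, the entire layered argument of Claim~\ref{claim-partition} goes through verbatim for any fixed history $h^{t-1}$ and any useful memory state $x$.
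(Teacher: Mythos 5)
Your proposal is correct and matches the paper's intent exactly: the paper does not give a separate proof of this claim, but instead derives the bound $\Pr[S^t=s\mid X^t=x, H^{t-1}=h^{t-1}, R_A=r_A]\le(\alpha/\beta)^{2n}$ for useful $x$ and asserts that the proof of Claim~\ref{claim-partition} carries over, which is precisely the layering-plus-Oddtown argument you reproduce (including the correct observation that $|s|=n/c$ is even so Corollary~\ref{pair-claim} applies). No gaps.
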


Claim~\ref{claim-partition-2} gives us the  appropriate definition of similar probability for two subsets. Therefore, in every breakpoint $t$ and for a given set  $s$ of values picked during $t-\epoch$, Bob choose  the pair $(s_1^t,s_2^t)$ s.t.\ $s \in (s_1^t,s_2^t)$ and $(s_1^t,s_2^t) \in P_x^t$. This pair is then added to $C_{t-1}$ to get $C_t$. 

\begin{theorem} \label{theorem-alice-loses-whp}
If Alice has no secrets, and memory of size of at most $\delta \cdot 2n$ bits, then there exists a strategy for Bob such that against it Alice loses with probability of at least
   $$ 1 - \left(  \dfrac{n+1}{2n+1} \right)^{c/2} - \negl(n)$$ 
\end{theorem}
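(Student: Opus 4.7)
The plan is to mimic Theorem~\ref{theorem-alice-loses-wp-half} breakpoint-by-breakpoint and chain the resulting per-pair survival bound across the $c/2$ breakpoints. First I would define the good event $G$ as the intersection, over $t \in \{1,\dots,c/2\}$, of the two events ``$X^t$ is useful for $R_A$ under history $H^{t-1}$'' and ``$S^t \in W_{X^t}^t$''. Applying Claim~\ref{claim-usefull-whp-2} and Claim~\ref{claim-partition-2}(1) conditioned on each history prefix $h^{t-1}$, each of these $c$ conditions fails with at most negligible probability, so a union bound (using that $c$ is a constant) yields $\Pr[G] \geq 1 - \negl(n)$.

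On $G$, for every $t$ Bob finds the unique pair $(s_1^t, s_2^t) \in P_{X^t}^t$ with $S^t \in \{s_1^t, s_2^t\}$ and appends it to $C_t$. By Claim~\ref{claim-partition-2}(2), $|s_1^t \cup s_2^t|$ is even; by construction $\bigcup C_t$ is excluded from Bob's support in every subsequent epoch and from Phase~2 while that pair is active, which in particular forces the unions $s_1^t \cup s_2^t$ for different $t$ to be pairwise disjoint. Repeating the parity/depletion argument of Observation~\ref{observation-alice-picks-first} for each pair, Alice is the first player to enter $s_1^t \cup s_2^t$ after breakpoint $t$, and she cannot avoid entering it forever since the game must terminate with all $2n$ numbers declared. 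Let $a_t$ denote Alice's first move inside $s_1^t \cup s_2^t$ after breakpoint $t$; Alice loses at pair $t$ exactly when $a_t \in S^t$.

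The core of the proof is the per-pair survival bound applied conditionally on ``survived at pairs $1,\dots,t-1$''. Fix $t$, condition on $G$ and on the history up to (but not including) the round in which $a_t$ is played, and on Alice having survived every earlier pair. I would argue that the ratio $\Pr[S^t = s_1^t \mid \cdot] / \Pr[S^t = s_2^t \mid \cdot]$ remains inside $[2n/(2n+1),\,(2n+1)/2n]$: Bob's strategy after breakpoint $t$ treats $s_1^t$ and $s_2^t$ symmetrically (both are entirely excluded from his support whenever the pair is active), so his visible later moves are invariant under swapping $s_1^t \leftrightarrow s_2^t$; and by the disjointness of the pair unions, the later constructions $(s_1^{t'}, s_2^{t'})$ for $t' > t$ and Alice's first-touch outcomes at those pairs live in element-universes disjoint from $s_1^t \cup s_2^t$ and thus carry no information about $S^t$. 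Given the ratio bound, the identical chain of inequalities used in the proof of Theorem~\ref{theorem-alice-loses-wp-half} (inequalities~\ref{eq-information}--\ref{eq-close-to-half}) yields
$$\Pr\bigl[\,a_t \in S^t \,\bigm|\, G,\ \text{survived pairs } 1,\dots,t-1\,\bigr] \;\geq\; \frac{n}{2n+1},$$
so Alice's conditional survival probability at pair $t$ is at most $\frac{n+1}{2n+1}$.

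Chaining the $c/2$ survival events by the multiplication rule gives $\Pr[\text{Alice wins} \mid G] \leq \left(\frac{n+1}{2n+1}\right)^{c/2}$, and combining with $\Pr[G] \geq 1 - \negl(n)$ yields $\Pr[\text{Alice loses}] \geq 1 - \left(\frac{n+1}{2n+1}\right)^{c/2} - \negl(n)$, as required. The main obstacle will be the conditioning argument of the previous paragraph: one must rigorously verify that Alice's full adaptive view at the moment she plays $a_t$ — her post-breakpoint-$t$ memory states, her subsequent random bits, and all of Bob's later visible moves — does not refine her posterior on $S^t$ beyond what Claim~\ref{claim-partition-2}(3) guarantees at breakpoint $t$ itself. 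The symmetry of Bob's strategy under $s_1^t \leftrightarrow s_2^t$ and the disjointness of the pair unions reduce this to a bookkeeping exercise, but carrying out the measurability details carefully is where the real work lies.
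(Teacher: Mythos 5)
Your proposal is correct and follows essentially the same route as the paper: a good event assembled from Claims~\ref{claim-usefull-whp-2} and~\ref{claim-partition-2} via a union bound over the constant number of breakpoints, the per-pair bound $\frac{n}{2n+1}$ obtained exactly as in Eq.~\ref{eq-close-to-half}, and multiplication of the $c/2$ conditional survival probabilities. You are in fact more explicit than the paper about the key conditioning subtlety (that Alice's view at turn $q_t$ must not sharpen her posterior on $S^t$ beyond the ratio bound of Claim~\ref{claim-partition-2}(3)), which the paper handles with a brief informal symmetry remark; the only caveat is that your asserted pairwise disjointness of the pair unions is not forced by the construction, since Alice may play into a decoy set of an earlier pair during a later epoch, though this does not change the overall argument.
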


\begin{proof}
If memory state $x^t$ is useful and $S^t \in W_x^t$, then Bob can find a pair $(s_1^t,s_2^t) \in P_x^t$ s.t.\ $S^t \in (s_1^t,s_2^t)$, $|s_1^t \cup s_2^t|$ is even and $$\frac{2n}{2n+1} \leq  \frac{\Pr[s_1^t|x,H^{t-1}=h^{t-1}]}{\Pr[s_2^2|x,H^{t-1}=h^{t-1}]} \leq \frac{2n+1}{2n}.$$ 

From breakpoint $t$ and on, Bob chooses a number that has not been used and not in $C_t$.
Moreover, at every round in epoch $1 \leq t' \leq c/2$ 
there are at least $n$ numbers that have not been used and are not in $C_{t'}$. Thus Bob has probability of picking a specific number is at most $1/n$. 

As in  Observation~\ref{observation-alice-picks-first}, for every  $1 \leq t \leq c/2$, the first one to pick a number from $s_1^t \cup s_2^t$ after breakpoint $t$ is Alice, since Bob avoids picking from these sets until Alice does so (and by parity she will be the first to do so). Call the element chosen  $a_t$ and the number of turn $q_t$ (note that the $q_t$'s are not necessarily increasing in $t$; furthermore, $q_t$ might be smaller than $n/2$). Therefore, as in Theorem~\ref{theorem-alice-loses-wp-half}, Eq.~\ref{eq-close-to-half} and applying Claim~\ref{claim-partition-2} we get that,
\begin{align*}
      &\Pr[a_t \in S^t |X^t \in U_t, S^t\in \{s_1^t,s_2^t\},(s_1^t,s_2^t)\in P_x^t,H^{t-1}=h^{t-1},R_A=r_A]  \\
      &\geq \min_{s'\in \{s_1^t,s_2^t\}} \Pr[S^t = s' |X^t \in U_t, S^t\in \{s_1,s_2\},(s_1,s_2)\in P_x^t,H^{t-1}=h^{t-1},R_A=r_A] \\ &\geq  \dfrac{2n}{2n+1} \cdot \dfrac{1}{2} \tag{5} \label{eq-close-to-half-2}
\end{align*}
In turn $q_t$ Alice needs to determine whether $S^t = s_1^t$ or $S^t = s_2^t$. We claim that even if Alice already guessed (before turn $q_t$) the values of $S_{j_1},...,S_{j_{\ell}}$ for breakpoints $j_1,...,j_{\ell}$  different from $t$, she still has the same probability of guessing $S^t$ correctly as in Eq.~\ref{eq-close-to-half-2}. The reason is that $s_1^t$ and $s_2^t$ have similar conditional probability given all of the history until breakpoint $t-1$. Therefore knowledge of those guesses does not give her any advantage at turn $q_t$.

By Claim~\ref{claim-usefull-whp-2} the memory state $x^t$ in breakpoint $t$ is useful with a probability of at least $1-\negl(n)$.
By Claim~\ref{claim-partition-2} $S^t \in W_x^t$ with probability of at least $1-\negl(n)$.
Overall we get that for every turn $q_t$ Alice picks a number that has already been played with probability at least 

        $$\dfrac{2n}{2n+1} \cdot \dfrac{1}{2} - \negl(n)$$

In order not to repeat the same number twice and lose the game, Alice needs to guess the right set for $S^t$ for every $t = 1, 2, \ldots,\frac{c}{2}$ (in some order). But whenever she makes such a guess, her probability of succeeding given that she has succeeded so far is roughly $(n+1)/(2n)$. Therefore the probability she succeeds in all of the cases is at most 

 $$\left(  \dfrac{n+1}{2n+1} \right)^{c/2} +\negl(n) $$

\end{proof}

\begin{corollary}
If Alice has no secrets, and memory of size at most $\frac{n}{4c}$ bits (for $c\geq 2$), then there exists a strategy for Bob where for every $\epsilon > 0$ and large enough $n$ Alice loses with probability at least
$$1 - \left(\dfrac{1}{2} + \epsilon \right)^{c/2}.$$
\end{corollary}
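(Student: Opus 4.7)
The plan is to derive this corollary as a direct asymptotic consequence of Theorem~\ref{theorem-alice-loses-whp}. That theorem already hands us a strategy for Bob under which any open-book Alice with memory at most $\delta \cdot 2n$ bits loses with probability at least $1 - \left(\frac{n+1}{2n+1}\right)^{c/2} - \negl(n)$, so the work splits into two pieces: (i) verify that the corollary's memory bound $n/(4c)$ is admissible within the parameter regime required by the theorem, and (ii) absorb $\left(\frac{n+1}{2n+1}\right)^{c/2} + \negl(n)$ into $\left(\frac{1}{2}+\epsilon\right)^{c/2}$ for all sufficiently large $n$.

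For (i) I would revisit the parameter setup preceding Claim~\ref{claim-usefull-whp-2}: we have $\alpha = (1/c)^{1/(4c)}$, and the theorem asks for some $\beta \in (\alpha, 1)$ and $\delta \in (0,1)$ with $2^{\delta}\beta < 1$, the memory bound being $\delta \cdot 2n$. To match the corollary's memory bound we need $\delta \ge 1/(8c)$, i.e., $\delta = 1/(8c)$ suffices. Such a choice is compatible with the constraint $2^{\delta}\beta < 1$ for some $\beta > \alpha$ exactly when $\log_2(1/\alpha) = \frac{\log_2 c}{4c} > \frac{1}{8c}$, i.e., when $\log_2 c > 1/2$. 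This is precisely where the hypothesis $c \ge 2$ enters, since then $\log_2 c \ge 1$. With $\delta = 1/(8c)$ and any $\beta \in (\alpha, 2^{-1/(8c)})$, Theorem~\ref{theorem-alice-loses-whp} applies to every open-book Alice with memory at most $n/(4c) = \delta \cdot 2n$.

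For (ii) I would use that $\frac{n+1}{2n+1} = \frac{1}{2} + \frac{1}{2(2n+1)}$ tends to $\frac{1}{2}$ from above. Given $\epsilon > 0$, for all $n$ large enough one has $\frac{n+1}{2n+1} \le \frac{1}{2} + \epsilon/2$, hence $\left(\frac{n+1}{2n+1}\right)^{c/2} \le \left(\frac{1}{2} + \epsilon/2\right)^{c/2}$. The gap $\left(\frac{1}{2}+\epsilon\right)^{c/2} - \left(\frac{1}{2}+\epsilon/2\right)^{c/2}$ is a positive constant depending only on $c$ and $\epsilon$, so it eventually dominates any $\negl(n)$ term. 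Chaining these inequalities with Theorem~\ref{theorem-alice-loses-whp} yields the stated $1 - \left(\frac{1}{2}+\epsilon\right)^{c/2}$ lower bound on Alice's losing probability.

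I do not foresee a genuine obstacle here: the proof is essentially a substitution followed by an asymptotic pass to the limit. The only mildly delicate point is the arithmetic in step (i), where the specific exponent $(1/c)^{1/(4c)}$ defining $\alpha$ together with the factor $1/(4c)$ in the corollary's memory bound line up just so to make the feasibility constraint $\log_2 c > 1/2$ coincide with $c \ge 2$; any weakening of either constant would shrink the range of $c$ for which the corollary holds.
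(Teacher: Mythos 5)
Your proposal is correct and follows essentially the same route as the paper: both reduce the corollary to Theorem~\ref{theorem-alice-loses-whp} by checking that $\delta$ can be taken at least $\frac{1}{8c}$ subject to $2^{\delta}\beta<1$ and $\beta>\alpha=(1/c)^{1/4c}$, which is exactly where $c\ge 2$ (i.e., $\log_2 c>1/2$) is used. Your explicit absorption of $\left(\frac{n+1}{2n+1}\right)^{c/2}+\negl(n)$ into $\left(\frac{1}{2}+\epsilon\right)^{c/2}$ is a step the paper leaves implicit, and is handled correctly.
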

\begin{proof}
The only requirement on $\delta$ in Theorem~\ref{theorem-alice-loses-whp} is that $2^{\delta}\beta < 1$. Therefore we can choose  $\delta$ s.t.:

    $$\delta \in \left(-\frac{3}{4}\log\beta,-\log\beta\right).$$

The only requirement on $\beta \in (0,1)$ is that $\beta > \alpha=(\frac{1}{c})^{1/4c}$. By choosing $\beta$ s.t.\
$\log\beta \in \left(\log\alpha,\frac{2}{3}\log\alpha \right)$ we get that
$\delta > -\frac{3}{4}\log\beta > -\frac{3}{4}\cdot\frac{2}{3}\log\alpha = -\frac{1}{2}\log\alpha$.
Therefore 
$$\delta >  -\frac{1}{2}\log\left(\left(\frac{1}{c}\right)^{1/4c}\right) = \frac{1}{8c}\log c > {\frac{1}{8c}}.$$ 

By Theorem~\ref{theorem-alice-loses-whp} the corollary  follows.  
\end{proof}

\section{Further Work}
Note that the strategy for Bob requires exponential time computation.
This suggests the open question of whether Bob's strategy can be made to be computationally efficient.  In particular, a possible direction for getting an efficient strategy  for Bob is to simulate Alice on various future selections, with the hope that this simulation reveals which set and matching she is actually storing in her memory and hence yield a move to play that will make her lose.    

Another question is whether it is possible to have a closed book Alice with little memory and only a few bits of long term randomness where even an all powerful Bob will not be able to  beat her. Recall that under computational limitations on Bob (the existence of a pseudorandom function that Bob cannot distinguish from random) this is possible, as pointed out by Menuhin and Naor~\cite{MenuhinN22}.

\noindent
{\bf Extending Eventown:}
Is there an extension of Berlekamp's Theorem that will tell us that given enough subsets not only are there two subsets whose union is of even cardinality but there are $c$ sets such that the union of any $c'$ of them is of even cardinality for all $c' \leq c$. This type of result would have made the amplification much simpler, as in Section~\ref{sec-half}, following turn $2k$ Bob finds $s_1, s_2, \ldots, s_c$ sets (where the actual used set $s$ is one of them, they all have similar conditional probability, i.e $\Pr[S=s_i|x,r_A]$ and such that the union of any $c'$ of them is of even cardinality). Bob's new strategy in turn t(for $t>2k$) is to pick some number that has not been used and
that is not in $\bigcup_{i\in A} s_{i}$, where $A\subseteq \{1,2...c\}$ indicate the indexes $i$ s.t.\ no number from $s_i$ has been used from turn 2k up to turn t. Since the union of any $c'$ subsets from $s_1,...,s_c$ is of even cardinality, we have that the first one to pick a number from $s_i$ after turn 2k is Alice. Thus, to secure a draw, Alice need to "guess" the right $s_i$, but this happens with a probability close to $1/c$ (instead of $1/2$). This also has the potential to yield a sharper result in terms of the probability of Alice drawing as a function of the memory she has.     

The latter raises the general question of whether  there is a better amplification procedure where you get that against an $o(n)$ memory Alice there is a strategy for Bob that make Bobs win with all but  exponentially in $n$  small probability. 

\noindent
{\bf Alice's side - precise bounds:} 
What is the precise bound on the size of memory Alice  actually needs in order to survive this ordeal. We can think of a deterministic strategy that requires $n$ bits (recall that there are $2n$ numbers all together). Alice fixes any perfect matching on the elements
$\{1, 2, \ldots, 2n\}$; for example $(1, 2), (3, 4), \ldots, (2n-1, n)$. Every bit $b_i$ represents a pair of numbers $(2i-1,2i)$. If $b_i=1$ it means that the numbers $2i-1,2i+1$ have both already been used and $b_i=0$ means that neither have been used. In addition  imagine that Alice stores an index $k$ of a value that was used but its partner was not used (we need to maintain the invariant that there is always at most one such value - in fact we will make sure it is in the smallest number has not be used so far, so there isn't even any need to store it explicitly). 

Therefore Alice's strategy is to start with the value $1$. Every time Bob chooses a number $j$ that is different from $2$ (the matched partner of $1$), Alice picks $j$'s matched number and changes the relevant bit to $1$. When Bob chooses $2$, then Alice set $b_1=1$ and finds, using the $n$ bits of memory, $b_1, b_2, \ldots, b_n$, the first pair that has not been played so far. She then chooses the first number from the pair as her move3e and sets $k$ accordingly. She continues in the same way till the end.

Recall that Garg and Schneider~\cite{mirrorgame} showed a lower bound on a deterministic Alice of around 0.664n bits. 
Is this tight? Perhaps an open book Alice can achieve better results.

\noindent
{\bf Limiting the memory size of both Alice and Bob:} 
What happens if {\em both} Alice and Bob have limited memory? In the spirit of the work on cooperation in repeated prisoner's dilemma with bounded rationality~\cite{PapadimitriouY94},   is there some sort of {\em stable solution} to the mirror game where it is possible to argue that it is not beneficial to neither party to deviate from a scripted solution (assuming they prefer winning over drawing and that over losing)?

\noindent 
{\bf Open book setting:} The open book setting appears in a variety of fields. In distributed literature it resembles the full information model (see \cite{DBLP:journals/siamcomp/GoldreichGL98}). In random sampling it similar to a fully adaptive adversary (see \cite{DBLP:conf/pods/Ben-EliezerY20}).
An interesting question is whether we can use similar techniques in order to find lower bounds in different multi players problems where one of the parties has no secrets.

A similar concern to open book appeared in the differential  privacy literature.  
Dwork et al.~\cite{DworkNPRY10}  considered a model of {\em pan-private streaming algorithms} where, as in the open book setting,
the entire internal state of the algorithm might be exposed to the adversary (either a fixed number of times or in what they called ``continual intrusions"). They construct
several streaming algorithms that remain secure in the pan-private model (e.g.\ distinct elements
count, incidence count, cropped mean, heavy hitters). 

\noindent
{\bf Adversarial streams and sampling:} Recently there has been a lot of exciting research on streaming in adversarial environments.  A streaming algorithm is called {\em adversarially robust} if
its performance holds even when the elements in the stream are chosen adaptively and in an adversarial manner after seeing the currents approximation or sample. The question is whether tasks that can be done with low memory against a stream that does not change as a function  of the seeing the current state can be done with low memory even if the status is public. For instance, on the positive side, Ben-Eliezer, Jayaram, Woodruff and Yogev~\cite{Ben-EliezerJWY20} showed general transformations for a family of tasks, for turning a streaming algorithm to be adversarially robust (with some overhead). On the other hand, Kaplan, Mansour, Nissim and Stemmer~\cite{KaplanMNS21} showed a problem that requires only polylogarithmic amount of memory in the static case but any adversarially
robust algorithm for it requires a much  larger memory.

The general question we ask is under what circumstance can we argue that a game where a player has no deterministic low memory strategy does not have an open book  (randomized) one as well. 

Independently of this work, Ajtai et al.~\cite{AjtaiBJSSWZ22} discussed a similar notion to ``Open Book" under the name ``White-Box Adversarial Data Stream Model" where the adversary sees the current state in full but does not know future coin flips. 
They showed a variety of results, for instance, a linear lower bound on the memory required for approximate frequency moment estimations in such a model.

\section*{Acknowledgements}
We thank Boaz Menuhin for very useful discussions and comments and thank Ehud Friedgut for discussions concerning finding an extension of Berlekamp's Theorem.  We thank the 
anonymous referees of FUN 2022 for numerous helpful comments.

\bibliography{main}

\end{document}